\begin{document}

%

\title{Optimizing Administrative Divisions: A Vertex $k$-Center Approach for Edge-Weighted Road Graphs}
\titlerunning{Optimizing Administrative Divisions: A Vertex $k$-Center Approach}  
%
%
\author{Peteris DAUGULIS
}
\authorrunning{Daugulis}   
%
%
\institute{Daugavpils university, Daugavpils LV-5401, Latvia
} 
%
\dedication{ORCID 0000-0003-3866-514X}
\emails{peteris.daugulis@gmail.com}
%
%

\maketitle      

\begin{abstract}
Efficient and equitable access to municipal services hinges on well-designed administrative divisions. It requires ongoing adaptation to changing demographics, infrastructure, and economic factors. This article proposes a novel transparent data-driven method for territorial division based on the Voronoi partition of edge-weighted road graphs and the vertex $k$-center problem as a special case of the minimax facility location problem. By considering road network structure and strategic placement of administrative centers, this method seeks to minimize travel time disparities and ensure a more balanced distribution of administrative time burden for the population. We show implementations of this approach in the context of Latvia, a country with complex geographical features and diverse population distribution.

\end{abstract}

\begin{keywords}
vertex $k$-center problem, minimax facility location, graph Voronoi diagram, territorial unit, administrative division
\end{keywords}

\section{Introduction}

Efficient governance relies on continuously improving the organization of a country. Modernization and optimization efforts enhance service delivery, resource allocation, and citizen satisfaction. Unbiased transparent assessments, data-driven insights, and rigorous analysis should underpin modernization reform proposals in democratic societies. Mathematics and computer science provide valuable tools for modeling administrative systems and processes.
The results of data analysis, modeling, and computation should form the bedrock of political and professional discussions.

Administrative division is a crucial element for organizing national and municipal activities and services. Its effectiveness should be enhanced through mathematical modeling. In this article, we offer a novel approach for solving the first-level territorial division problem as a discrete optimization problem and illustrate possible solutions in the case of Latvia. 

 We assume that a country must be partitioned into a set of \sl territorial units (TU)\rm.\ For an administrative division to be well defined it must be based on a small set of quantitative parameters. In this article, we have chosen time as the single unifying human-centered metric. We consider the time spent traveling to and from the TU center by motor vehicle to be the only quantitative parameter used for formulating the administrative division principles. We view the time citizens spend to comply with their obligations or to receive necessary state or municipal services as a form of taxation. Consequently, we seek to distribute this time burden as uniformly as possible among the population. While complete uniformity in travel times for residents across different locations is unattainable within a minute, it is feasible to minimize the difference between the maximum and minimum travel times, as well as reduce the maximum travel time. This may also minimize differences in workloads involving travels from TU centers for staff of different TUs. The proposed approach aligns with the fundamental principles of administrative-territorial division, namely, homogeneity, effectiveness, efficiency, and fairness (Bradshow and Keating, \cite{bradshow}).  This approach departs intentionally from conventional methods encoded in the current legislature that typically prioritize certain regions, local interests, historical boundaries, or population size. Removing population numbers as a factor determining administrative structure is a necessary step to reverse the detrimental human flow to the capital and traditional centers. An objective is to acknowledge the inherent limitations and biases associated with these criteria and cultivate a more equitable and inclusive approach.

 We consider a country as a network of interconnected urban and rural settlements, linked by roads. This network can be modeled as an undirected edge-weighted \sl road graph,\rm\ where vertices (nodes) represent settlements and edges represent roads connecting them. Edge weights are the minimal travel times determined by speed limits and typical traffic conditions necessary to travel the road between the two endpoint vertices.  This representation captures the connectivity and travel times of the road network, providing a framework for analyzing administrative divisions.
We are interested in a well-defined method for constructing an unbiased administrative division of a country in terms of the road graph - finding centers of TUs and subsets of vertices constituting each TU.

We need basic relevant notions and facts from the graph theory. Let $G=(V,E,w)$ be an undirected graph with a nonnegative edge-weight function $w$. The weight of a path is the sum of the weights of all edges in that path. The distance between two vertices $u\in V$ and $v\in V$, $d(u,v)$, is defined as the weight of a $(u.v)$-path of minimal weight. $d:V\times V\rightarrow \mathbb{R}$ is a metric in $V$. Eccentricity of $v\in V$, $e(v)$, is defined as $\displaystyle\max_{x\in V}d(v,x)$. Radius of $G$ is $r(G):=\displaystyle\min_{v\in V}e(v)$. Diameter of $G$ is $d(G):=\displaystyle\max_{v\in V}e(v)$. Center of $G$ is $Z(G):=G[\mathcal{Z}]$, where $\mathcal{Z}:=\{x\in V|e(x)=r(G)\}$. Periphery of $G$ is $P(G):=G[\mathcal{P}]$, where $\mathcal{P}:=\{x\in V|e(x)=d(G)\}$. Median of $G$ is $M(G):=G[\mathcal{M}]$, where $\mathcal{M}$ is the set of vertices $v$ having minimal total distance $\sum\limits_{x\in V}d(v,x)$.

See (Catini et al.,\cite{catini}) for an example of a computational methodology for analyzing geospatial data networks.

The paper is organized as follows. In Section 2 we describe the data and the computational methods. Results and discussion are given in Section 3. Finally, concluding remarks and future work are drawn in Section 4.

\section{Methods} 

\subsection{Data}

Our data for demonstrating an implementation of the proposed travel time-based administrative division approach and execution of all associated computational tasks is an edge-weighted (partial) road graph of Latvia $\Gamma=(V,E,t)$.  Retrieving information from Google Maps (WEB, \cite{b}),  a globally recognized provider of high-quality geospatial information, we have constructed an undirected edge-weighted graph having 1067 vertices and 1753 edges. Vertices cover the whole territory of Latvia except Riga, the capital of Latvia. Riga is not included in our model since it is a large and densely populated city with a complex road network which is difficult to model as a road graph. A majority of towns or other marked populated areas in Latvia and some major crossroads are chosen as vertices.  Edges correspond to roads connecting the vertices. Road status is not taken into account. We note that not all small settlements and roads are included. The edge-weight function $t:E \longrightarrow \mathbb{R}^{+}$ is the travel time by motor vehicle in minutes between the end vertices as recorded by Google Maps in October-November 2023.

Positions of most settlements are chosen within the dots representing them. For some small settlements, the points are chosen outside their boundaries close to the nearest road intersections. For a few larger towns we have chosen several vertices to represent these towns. We have to assume that a typical error of vertex coordinates is the geographical radius of the dot denoting the town in Google Maps. Thus for a majority of vertices position errors are typically between 10 and 100 meters, depending on the settlement size. The implied travel time errors are less than 1 min - the minimal Google Maps time unit. It means that our coordinate errors can be considered as not affecting the edge weights of the road graph and, consequently, computations involving the road graph.

The geographic coordinates are transformed into the Cartesian coordinates. The set of points on the Earth sphere is orthogonally projected onto the plane approximating this set. This step allows us to simplify computations and visualizations. We note that this may be a novel map projection.

The real non-stop travel time given by Google Maps between two vertices having at least 1 intermediate vertex in their shortest path is less than the sum of travel times for all edges in the shortest path. Since we model travel using edge weights we have to implicitly assume that travel includes stopping in all intermediate vertices. This error increases with the number of intermediate vertices. The fastest routes between points as given by Google Maps often differ from the minimal weight paths in $\Gamma$ because of this reason. These errors affect global computations, such as radius or diameter. Their influence is not significant for local computations, e.g. when the number of TUs is such that the faster routes from each TU center contain at most 5 intermediate vertices. Road conditions and other infrastructure limitations that affect travel choices and time are not incorporated in the model as well. 

\subsection{Computations}

\subsubsection{From centers to  TUs in road graphs - Voronoi cells in edge-weighted graphs}

Suppose we are given a road graph - an undirected edge-weighted graph $G=(V,E,w)$.
We define TU as a pair $(V_{TU},c)$, where $V_{TU}\subseteq V$ and $c\in V_{TU}$ is a distinguished element - the \sl TU center.\rm\  
For any \sl centered partition\rm\ $\textbf{P}=\{(V_1,c_1),...,(V_m,c_m\}$ of $V$, $c_i\in V_i$,  we define radius of the TU $(V_i,c_i)$ as $r(V_i,c_i):=\displaystyle\max_{x\in V_i} d(x,c_i)=e(c_{i})\Big{|}_{G[V_i]}$. We define radius of $\textbf{P}$ $r(\textbf{P}):=\displaystyle\max_{i}r(V_i,c_i)$. 
We think of the $G$-subgraph induced by $V_i$ - $G[V_i]$, as the TU road graph of $V_i$.

Suppose we are given a set of centers $S\subseteq V$. For any $c\in S$ define the vertex subset $V_S(c)$ as the Voronoi cell of $c$ as an element of $S$: 
$$V_S(c)=\{v\in V|d(v,c)\le d(v,c'), 
c'\in S, c'\ne c\}.$$  In other words, $V_S(c)$ contains all vertices for which $c$ is reachable faster than any other center. The set $\{V_S(c)\}_{c\in S}$ is a partition of $V$. We call the centered partition $\textbf{V}(S)=\{(V_S(c),c)\}_{c\in S}$ - \sl the centered Voronoi partition\rm\  for $S$.



Any $S\subseteq V$ defines a Voronoi diagram in $G$ (Erwig, \cite{erwig}; Gawrychowski et al., \cite{gawr}; Hakimi and Labbe, \cite{hakimi}; Melhorn, \cite{melhorn}). Voronoi diagrams have been considered for use in territorial management and planning (WEB, \cite{a}; Ricca et al., \cite{ricca}).

To minimize bias within TUs, we impose an additional constraint that the TU center $c$ lies within the graph center of its Voronoi subgraph in the underlying graph $G$ - $c\in Z(G[V_S(c)])$. This ensures that the TU center is positioned centrally within its assigned region, reducing the potential for bias or unfairness. Consequently, the eccentricity of each TU center $c$ in the TU's Voronoi subgraph is equal to the radius of the TU subgraph.

The crucial step is to set optimization conditions and find a solution set $S_{opt}$ of TU centers. This would trivially imply a partition of $V$ - its Voronoi partition $\textbf{V}(S_{opt})$.


The Voronoi partition approach can also be used by accepting a set of preselected district centers and constructing the TUs as the Voronoi partition without further optimization.

\subsubsection{Optimal centers - a minimax facility location problem}

If the set of TU centers $S$ is chosen then we define TUs uniquely as Voronoi cells of $S$. Each TU graph $G[V_S(c)]$ with the center $c$ can be characterized by its graph radius $r(V_S(c),c)$. Note that for any TU its radius depends on the positions of all other TU centers. 

For a fixed number of TUs ($|S|$) we want to choose the center set $S$ such that $\displaystyle\max_{c\in S}r(V_S(c),c)=r(\textbf{V}(S))$ is minimal. The motivation for this requirement is a drive to minimize the maximal "time burden" of TUs - to find a TU division such that the maximal radius of a TU is as small as possible for a given number of TUs. This would ensure that no TU has a substantially larger or smaller radius than the others. Thus we pose an optimal TU problem as a minimax facility location problem, see (Church and Drezner, \cite{church}) for a recent survey.

Since $d(G)\le 2 r(G)$, the radius minimization implies the diameter minimization which corresponds to the minimization of travel times between any TU vertices.

\subsubsection{A modified vertex $k$-center problem and its application to territorial division} 
We can slightly rephrase the problem. Let $G$ be an undirected graph with a metric $d(\cdot,\cdot)$ defined in $V(G)$. The problem is to find a $k$-set $S\subseteq V(G)$ such that $\displaystyle\max_{v\in V(G)}d(v,S)$, where $d(v,S)=\displaystyle\min_{c\in S}d(v,c)$, is minimal. This problem (\sl the vertex $k$-center problem\rm) first proposed in (Hakimi, \cite{hakimi2}) is known to be NP-hard.  Polynomial time approximation algorithms, such as the 2-approximated algorithms Sh (Shmoys,\cite{shmoys}), Hs (Hochbaum and Shmoys, \cite{hoch}) or Gon (Gonzalez, \cite{gonz}) algorithms, use farthest point clustering as an important step. Other possible approximation steps are local search and random sampling in neighborhoods of vertices.

Our problem of finding a set of graph-centered TU centers minimizing the maximal TU radius amounts to the vertex $k$-center problem in $\Gamma$ with the additional condition requiring TU centers to be centers of the TU subgraphs of the ambient graph,  ensuring that the TU centers are well-placed within their administrative units. Formally, it means to find $S\subseteq V$ for which  $$\min\limits_{\substack{S\subseteq V(G), |S|=k,\\ \forall c\in S:c\in Z(G[V_S(c)]}}\Big(\max\limits_{v\in V(G)}\min\limits_{c\in S}d(v,c)\Big)$$  is achieved.

\subsubsection{Approximation algorithms for the territorial division problem}

We find optimal center sets $S_k\subseteq V(G)$,$|S_k|=k$, in two steps. Below we also mean $S_k$ as a running variable.

\begin{enumerate}
    \item[ \sl Step 1\rm\ ] We find a first approximation of the center set using the greedy farthest clustering (maximin) method as in Sh, Hs, or Gon algorithms, where it is the only step. We start with a random or chosen vertex $\{v_1\}=S_1$. We construct the sequence $S_1,S_2,...,S_k$ by the following loop. At the $i$-th iteration we construct $S_i$ by adding to $S_{i-1}$ a vertex $v$ such that $\displaystyle\min_{c\in S_{i-1}}d(v,c)$ is maximal. After finishing this loop and finding $S_k$ we do an additional substep. We repeatedly update $S_k$ shifting each of its elements to graph centers of their Voronoi cells. This does not increase the radius or the eccentricity of the TU center. Then we recompute the Voronoi cells.  The re-computation does not increase the maximal radius as well, see Proposition \ref{1}. In practice, it always decreases the maximal radius and stabilizes after at most 5 shift-recomputation iterations.
    
    \item[ \sl Step 2\rm\ ] We do cycles of exhaustive local searches in neighborhoods of vertices of $S_k$. In practice, we have used $n$-th neighborhoods with $n\le 30$. At each iteration of the inner loop, we exhaustively search in the $n$-th neighborhood of one element of $S_k$ fixing the other elements of $S_k$. If we find a center which defines a center set $S_{k}'$ with $r(\textbf{V}(S_{k}'))<r(\textbf{V}(S_{k}))$, we update the center set: $S_k\leftarrow S_{k}'$. This step is terminated when there is no improvement after a neighborhood search of all $n$ center vertices. Next, we update $S_k$ again by shifting its elements as described in \sl Step 1.\rm\  
    
    \item[ \sl Step 3 (optional)\rm\ ] After the minimization of the maximal TU radius we can maximize the minimal TU radius while keeping the obtained maximal TU radius constant. This makes the radius values as close as possible without changing the maximal radius value. It focuses on enhancing the balance within the administrative division.  The minimization can be achieved in the same way as in \sl Step 2.\rm\
\end{enumerate}

\begin{proposition}\label{1} Let $G=(V,E,w)$ be an undirected edge-weighted graph with a positive weight function, $S\subseteq V$,  
$\textbf{A}=\{(V_S(c),c)\}_{c\in S}$ - the centered Voronoi partition for $S$.
Then for any centered partition $\textbf{B}=\{(V'_{c},c)\}_{c\in S}$ with $c\in V'_a$ we have $r(\textbf{A})\le r(\textbf{B})$.
    
\end{proposition}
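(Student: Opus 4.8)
The plan is to reduce both partition radii to one common quantity --- the maximum over vertices of the distance to the nearest center of $S$ --- and then observe that the Voronoi partition attains this value while no centered partition on $S$ can beat it. Write $d(v,S)=\min_{c\in S}d(v,c)$, and for a centered partition $\textbf{P}=\{(V_c,c)\}_{c\in S}$ let $\pi_{\textbf{P}}(v)$ denote the center of the unique block of $\textbf{P}$ containing $v$. Unwinding the definition of $r(\textbf{P})$ gives $r(\textbf{P})=\max_{c\in S}\max_{x\in V_c}d(x,c)=\max_{v\in V}d\bigl(v,\pi_{\textbf{P}}(v)\bigr)$.

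First I would evaluate $r(\textbf{A})$. By the defining inequality of a Voronoi cell, $v\in V_S(c)$ forces $d(v,c)\le d(v,c')$ for every $c'\in S$, i.e. $d\bigl(v,\pi_{\textbf{A}}(v)\bigr)=d(v,S)$; hence $r(\textbf{A})=\max_{v\in V}d(v,S)$. If one instead reads the TU radius as the eccentricity $e(c)\big|_{G[V_S(c)]}$ inside the induced subgraph, one checks in addition that a minimum-weight $v$--$c$ path in $G$ never leaves $V_S(c)$: if an interior vertex $u$ of such a path were closer to some $c'\ne c$, then $d(v,c')\le d(v,u)+d(u,c')<d(v,u)+d(u,c)=d(v,c)$, contradicting $v\in V_S(c)$; so $d_{G[V_S(c)]}(x,c)=d_G(x,c)$ and the two readings coincide.

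Then, for an arbitrary centered partition $\textbf{B}=\{(V'_c,c)\}_{c\in S}$ with $c\in V'_c$, every vertex $v$ lies in the block $V'_{\pi_{\textbf{B}}(v)}$, so $d\bigl(v,\pi_{\textbf{B}}(v)\bigr)\ge\min_{c\in S}d(v,c)=d(v,S)$. Taking the maximum over $v$ yields $r(\textbf{B})=\max_{v\in V}d\bigl(v,\pi_{\textbf{B}}(v)\bigr)\ge\max_{v\in V}d(v,S)=r(\textbf{A})$, i.e. $r(\textbf{A})\le r(\textbf{B})$. (Measuring $r(\textbf{B})$ by subgraph eccentricities only widens the inequality, since $d_{G[V'_c]}(x,c)\ge d_G(x,c)$.)

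There is no serious obstacle here; the argument is essentially bookkeeping with the definitions. The one point that warrants a line of care is reconciling the two possible meanings of ``radius of a TU'' --- distance in the ambient graph versus eccentricity in the TU subgraph --- handled by the shortest-path remark above, together with the tacit assumption (already in force when $\{V_S(c)\}_{c\in S}$ is called a partition) that ties in the Voronoi rule are broken by a fixed convention, so that $\pi_{\textbf{A}}$ is well defined.
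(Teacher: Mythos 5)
Your proof is correct and follows essentially the same route as the paper's: rewrite each partition radius as $\max_{v\in V}d(v,\pi(v))$, note that the Voronoi assignment realizes $d(v,S)=\min_{c\in S}d(v,c)$ for every $v$ while any other centered partition on $S$ can only assign $v$ a center at least that far away, and take maxima. The extra care you take in reconciling ambient-graph distance with subgraph eccentricity (and in noting the tie-breaking convention) addresses a point the paper's proof silently glosses over, but it does not change the argument.
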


\begin{proof} $r(\textbf{A})=\displaystyle\max_{c\in S}r(V_S(c),c)=\max_{c\in S} \max_{v\in V_S(c)} d(v,c)=\max_{v\in V} d(v,c_{v})$ where $c_v\in S, c_v\in V_i: v\in V_i$. Similarly $r(\textbf{B})=\max\limits_{v\in V} d(v,c'_{v})$ where $c'_v$ where $c'_v\in S, c'_v\in V'_j: v\in V'_j$. For any $v\in V$ we have $d(v,c_{v})\le d(v,c')$ for any $c'\in S$, which implies the statement. \qedsymbol

\end{proof}

We note that moving the TU centers to the subgraph centers and re-computation of the Voronoi partition as well as \sl Step 2\rm\ usually decrease the maximal radius of the division, thus these steps can be considered to be improvements of the classic vertex $k$-center approximation algorithms. The algorithm without center shifts and Voronoi recomputations usually produces divisions with lower radius values which can not be used since they violate our requirement about the TU center. We also note that backtracking and random search in neighborhoods of center vertices require more computation time or give higher radius values. Thus they are not practical choices for the second step.

The shifting of traditional TU centers to the TU graph centers and preserving traditional district borders with or without Voronoi repartitioning without optimization can also be used in territorial reform. 


The necessary software has been developed and computations are performed by MAGMA (Bosma et al, \cite{bosma}). The MetaPost Previewer of Henderson (WEB, \cite{c}) is used for visualization.

\subsubsection{Borders of TUs}

After TUs are defined as sets of road graph vertices, it is necessary to draw borders as polygonal chains separating these sets. Drawing justified borders as polylines requires additional arguments beyond road graphs. We consider this problem to be out of the scope of this article. Nevertheless, we sketch some ideas and use them in visualization.

A preliminary border for each TU can be constructed as the alpha shape of its vertex set (Edelsbrunner et al., \cite{edelsbrunner}). The alpha shape method defines a polygon depending on a real parameter for each TU. These polygons exhibit gaps between them. Additional considerations should be used to split these gaps between the neighboring TUs and delineate the borders.

Alternatively, we can use edges, referred to as \sl cross-edges,\rm\ such that their endpoints belong to different TUs. Geographic midpoints of cross-edges can be considered as the initial data for constructing the TU borders. These borders can be defined as graphs using cross-edge midpoint sets as vertices. For instance, minimal spanning trees or relative neighbor graphs (Toussaint, \cite{touss}) can be used to draw borders. In practice, relative neighbor graphs give visually appealing results except for a small number of isolated edges.

\section{Results and discussion}

\subsection{Properties of the road graph of Latvia}

$\Gamma$ is depicted in Fig.1. Graph-theoretic properties of the existing public road network of Latvia, relevant to the topic of the article, can be computed. The maximal edge weight (the travel time between two vertices) is 37 min, the minimal edge weight is 1 min. The mean edge weight is 10.2 min, and the standard deviation of the edge weight multiset is 5 min. The maximal vertex degree is 9, the degree sequence is $[0,35,284,360,217,112,32,18,4,5]$. The mean vertex degree is 3.3, the standard deviation of the degree multiset is 1.3. The mean shortest path weight is 403 min, the standard deviation of the shortest path weight multiset is 323 min. The modal shortest path weight is 127 min.

The diameter of $\Gamma$ is 554 min, the radius is 282 min. $Z(\Gamma)$ is Ozolaine, Bauska municipality. The periphery $P(\Gamma)$ consists of two vertices - Zalesye, Ludza municipality, and Nida (also the antimedian), Dienvidkurzeme municipality. The median of $\Gamma$ is Kegums, Ogre municipality.

\begin{figure}
    \centering
    \includegraphics[width=1\textwidth]{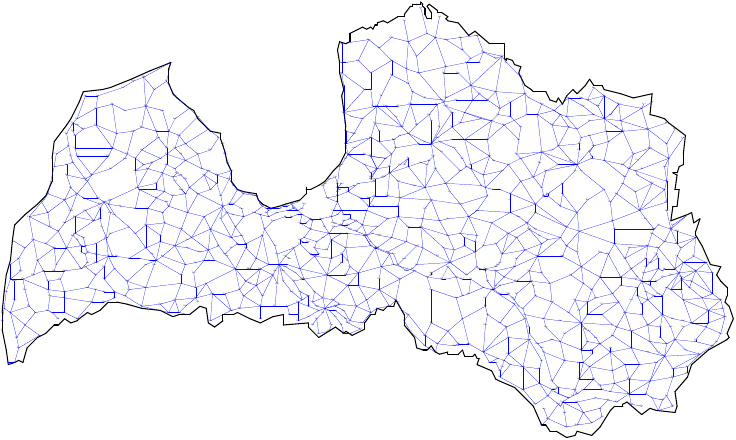}
    \caption{ $\Gamma$ - a partial road graph of Latvia.}
    \label{fig:enter-label}
\end{figure}

\subsection{Examples of territorial division}

\subsubsection{$k=1$}

The case $k=1$ is the problem of finding $Z(\Gamma)$ which can be interpreted as identifying a candidate for the optimal position for the capital of Latvia. $Z(\Gamma)$ is located in Ozolaine, Bauska municipality, as shown in Fig.2.  The eccentricities of vertices near Riga, the capital of Latvia,  are at least 300 min which is bigger than $r(\Gamma)$ =282 min. A transfer of administration functions requiring citizens' travel from Riga to Ozolaine would mean time savings and improved access to services for residents living in the regions of Kurzeme,  Latgale and Zemgale.

\begin{figure}
    \centering
    \includegraphics[width=.7\textwidth]{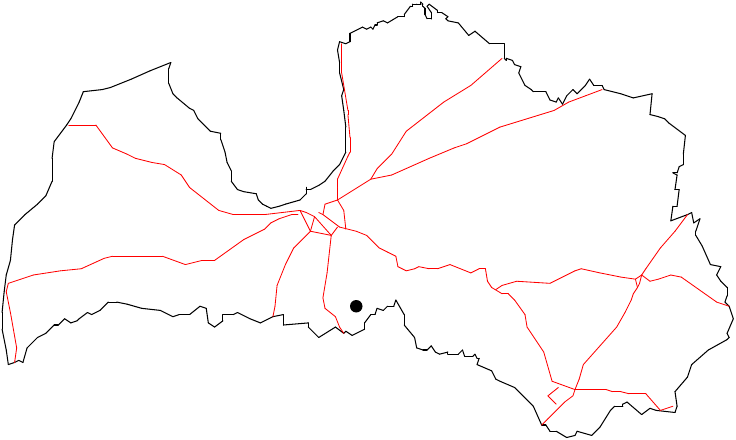}
    \caption{The case $k=1$. The black dot is in the geographic position of $Z(\Gamma)$. Red lines are state main roads.}
    \label{fig:enter-label}
\end{figure}

\subsubsection{$k>1$} We have found instances of TU divisions with low values of the maximal TU radius, denoted $R$, for small values of $k$. For $n=2,3$ we have found the exact minimal values of $R$ by exhaustive search. For some values of $k$, solutions are not unique, but the differences between solutions are local. Table 1 shows the smallest obtained $R$ values for some $k$. The alpha shapes of minimal solutions for $k:\ 2\le k\le 9$ are shown in Fig.3.

\begin{table}
    \centering
    \caption{The smallest obtained $R$ values}
    \medskip
\resizebox{\textwidth}{15pt}
{
    \begin{tabular}{|c|c|c|c|c|c|c|c|c|c|c|c|c|c|c|c|c|c|c|}
    \hline
      k   & 2 &3  &4  &5  &6  &7  &8  &9  &10&11 &12&13&14&15&20&25&30&37\\
      \hline
    R(min)  & 187 & 146 & 132 & 123 & 104 & 97 & 89 & 88  & 81 & 78&75&72&69&66&59&54&51&46\\
    \hline
    \end{tabular}
}
    \label{tab:my_label}
\end{table}

\begin{figure}
\centering
\begin{subfigure}{0.45\textwidth}
    \includegraphics[width=\textwidth]{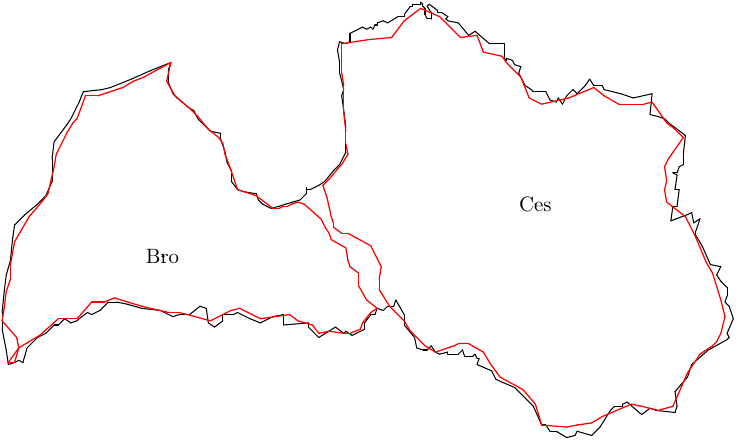}
    \caption{$k=2$.}
    \label{fig:first}
\end{subfigure}
\hfill
\begin{subfigure}{0.45\textwidth}
    \includegraphics[width=\textwidth]{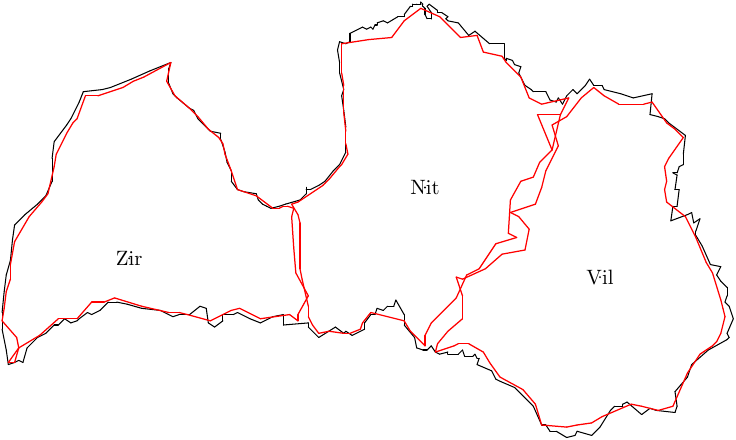}
    \caption{$k=3$}
    \label{fig:second}
\end{subfigure}

\bigskip

\begin{subfigure}{0.45\textwidth}
\centering
    \includegraphics[width=\textwidth]{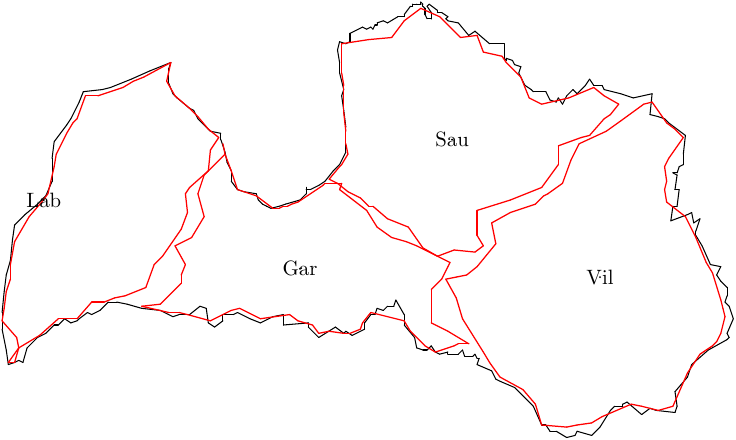}
    \caption{$k=4$.}
    \label{fig:second}
\end{subfigure}
\hfill
\begin{subfigure}{0.45\textwidth}
    \includegraphics[width=\textwidth]{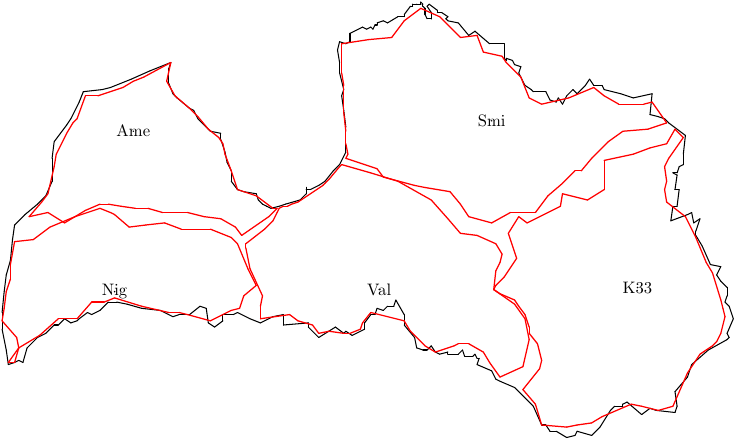}
    \caption{$k=5$}
    \label{fig:second}
\end{subfigure}

\bigskip

\begin{subfigure}{0.45\textwidth}
\centering
    \includegraphics[width=\textwidth]{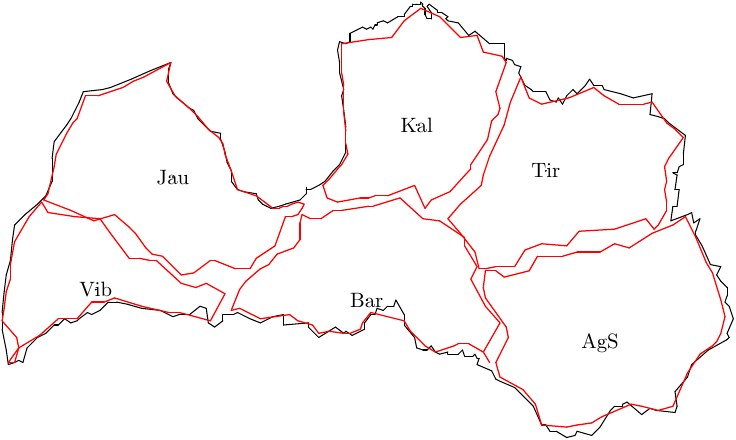}
    \caption{$k=6$.}
    \label{fig:second}
\end{subfigure}
\hfill
\begin{subfigure}{0.45\textwidth}
    \includegraphics[width=\textwidth]{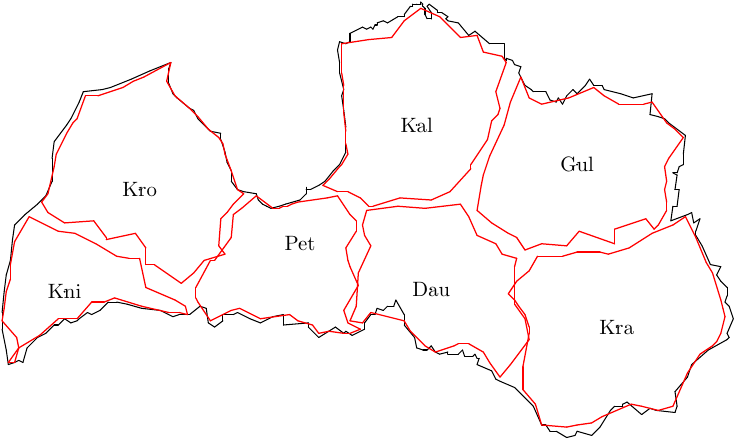}
    \caption{$k=7$}
    \label{fig:second}
\end{subfigure}

\bigskip

\begin{subfigure}{0.45\textwidth}
\centering
    \includegraphics[width=\textwidth]{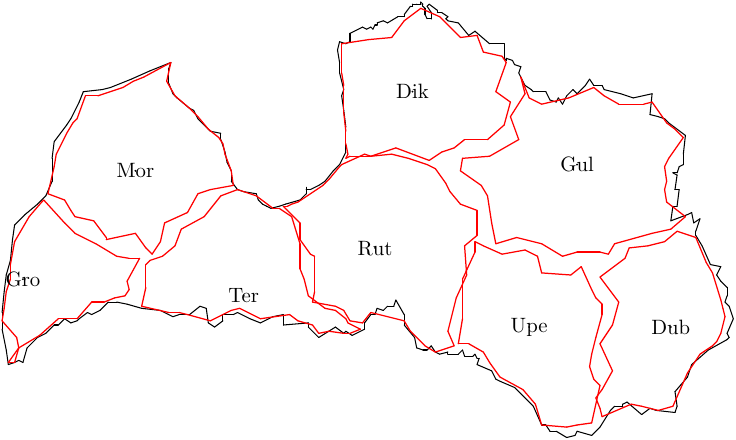}
    \caption{$k=8$.}
    \label{fig:second}
\end{subfigure}
\hfill
\begin{subfigure}{0.45\textwidth}
    \includegraphics[width=\textwidth]{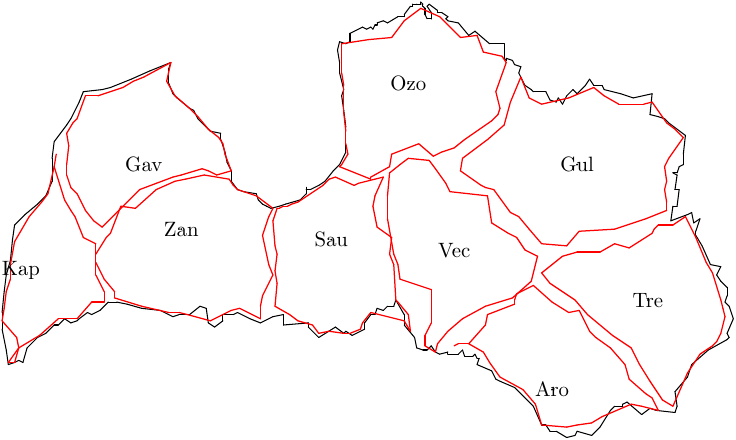}
    \caption{$k=9$}
    \label{fig:second}
\end{subfigure}
        
\caption{Cases $2\le k\le 9$. The alpha shape view of TUs. The letters are at the geographic positions of the centers. Red lines are preliminary borders of TUs.}
\label{fig: 3}
\end{figure}

\bigskip

\noindent
We discuss some special cases of $k$.

\paragraph{$k=5$} Currently, a division of Latvia into 5 historic lands is accepted. We have found an optimal division into 5 TUs with district centers in Amele, Nigrande, A13/A15 roundabout south-west of Rezekne, Smiltene, and Valle, see Fig.4. For this division, the $R$-value is equal to 123 min and the minimal radius is equal to 120 min. Introducing such a territorial division would create TUs having close worst travel time parameters.

\begin{figure}
\centering
\begin{subfigure}{0.45\textwidth}
    \includegraphics[width=\textwidth]{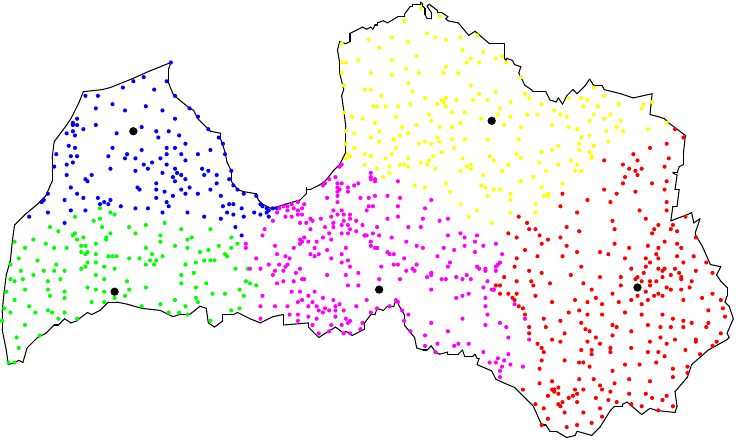}
    \caption{TUs as vertex subsets.}
    \label{fig:first}
\end{subfigure}
\hfill
\begin{subfigure}{0.45\textwidth}
    \includegraphics[width=\textwidth]{F5_1501_123_3.pdf}
    \caption{TU borders as alpha shapes.}
    \label{fig:second}
\end{subfigure}

\bigskip

\begin{subfigure}{0.45\textwidth}
\centering
    \includegraphics[width=\textwidth]{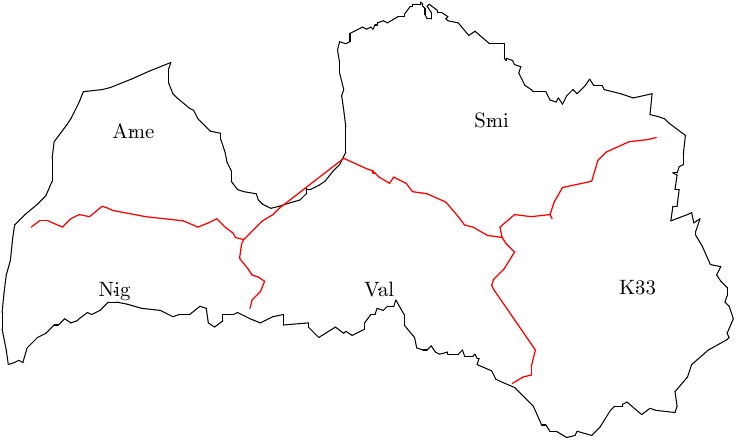}
    \caption{TU borders as relative neighborhood graphs of cross-edge midpoints.}
    \label{fig:second}
\end{subfigure}
        
\caption{A case $k=5$.}
\label{fig: 3}
\end{figure}

\paragraph{$k=15$} For the current first-level administrative division the maximal TU (municipality) radius value is 68 min, see the Dienvidkurzeme municipality. Additionally, the centers of 28 out of 37 TUs are different from their road subgraph centers. In our approach, the minimal number of TUs having the maximal radius less or equal to 68 min is at most 15. A TU division having 15 parts with the maximal radius equal to 67 min is shown in Fig.5. The minimal radius is 60 min, the mean radius is 63.4 min, and the standard deviation is 2.7 min. Since there are 37 territorially separate municipalities, distinct from Riga, in the current territorial division, our method shows a solution that would significantly reduce the maintenance expenses of the municipal system by decreasing the number of TU more than twice.

\begin{figure}
    \centering
    \includegraphics[width=.7\textwidth]{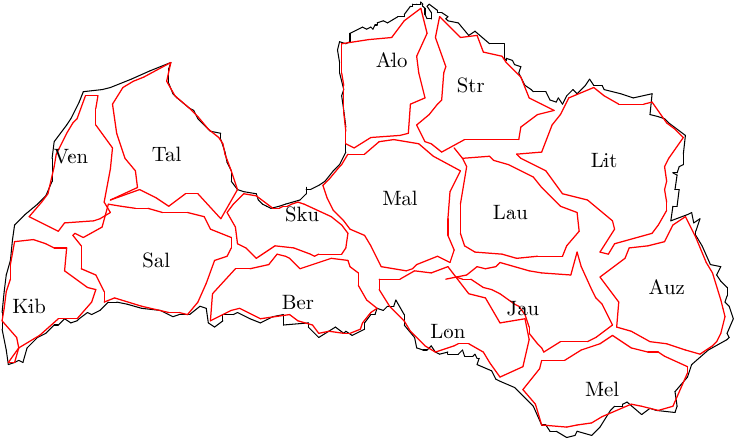}
    \caption{A case $k=15$. The alpha shape view.}
    \label{fig:enter-label}
\end{figure}

\paragraph{$k=37$} For the current administrative first-level division of Latvia the number of TUs can be assumed equal to 37, see \ref{2}.  In our approach, there is a TU division with 37 parts for which the maximal radius is 46 min, the mean radius is 41.6 min and the standard deviation is 3.2 min.  This example also demonstrates the optimization potential of our approach.

\subsubsection{The current first-level administrative division in Latvia and discussion}\label{2}

To provide a comparative context, we examine further the current first-level administrative division in Latvia, as detailed in (WEB, \cite{d}). It comprises 36 municipalities and 7 state cities, including 6 enclaves situated within other municipalities. Consequently, we assume 37 TUs in total. As previously mentioned, Riga, the capital of Latvia, is excluded from our analysis. The maximum TU radius is 68 min, the minimum is 14 min. The mean radius is equal to 42 min, and the standard deviation is 13.5 min. This substantial variation in TU sizes underscores the significant discrepancies in travel times for accessing administrative services among residents. The most frequently observed radius is 47 min, occurring in five TUs. Notably, the administrative center and road subgraph center differ for 28 TUs.

Based on this information and our computations, we conclude that there is a need to reconsider the current first-level administrative division in Latvia. The maximal TU radius should be the initial value for computations. A more efficient system would likely have fewer TUs with boundaries more aligned with the existing road network. Our method would help to reduce travel time inequality and save time for residents to access administrative services.  By consolidating administrative responsibilities into fewer TUs, the government could potentially reduce administrative costs. Enhanced accessibility to administrative services could support economic development in underserved remote rural communities.  Relocating administrative centers can invigorate economic growth and infrastructure development in new areas, while also serving as a symbol of change and progress. Establishing clean and justified district borders to substitute the existing convoluted and outdated traditional borders can create a positive aesthetic effect. They would become a visual representation of the country's evolution, and establish an association with the use of technology and rational planning.

Transitioning to a new administrative structure requires legislative adjustments, upfront relocation costs, and navigating potential resistance. Nevertheless, the long-term benefits for Latvian citizens are substantial. A more efficient and equitable administrative structure could improve the access to services for residents and contribute to the overall development of the country.

\section{Conclusion}

We propose a novel approach to administrative-territorial division based on the concept of the radius of edge-weighted graphs. The feasibility of computational and visualization methods has been demonstrated in the case of Latvia. The proposed optimization approach has been shown to have the potential to significantly improve the efficiency of the administrative structure in Latvia by reducing the number of TUs by 58\% while preserving the maximal travel time to the TU center. We encourage policymakers and relevant professionals to consider this innovative solution.

In future developments, additional edge weights and vertex weights can be added to the model to capture more road network and territorial characteristics. Other characteristics such as population-weighted distance, total distance, and distance weighted by socio-economic indicators, can be used in future iterations. 




%
\received{February 1, 2024}{*}{April 12, 2024}
\end{document}